 \pgfplotsset{compat = newest}
\newcommand{\executeiffilenewer}[3]{%
\ifnum\pdfstrcmp{\pdffilemoddate{#1}}%
{\pdffilemoddate{#2}}>0%
{\immediate\write18{#3}}\fi%
}
\newcommand{%
\executeiffilenewer{.svg}{.pdf}%
{inkscape -z -D --file=.svg %
--export-pdf=.pdf --export-latex}%
\input{.pdf_tex}%
}[1]{%
\executeiffilenewer{#1.svg}{#1.pdf}%
{inkscape -z -D --file=#1.svg %
--export-pdf=#1.pdf --export-latex}%
\input{#1.pdf_tex}%
}
\theoremstyle{plain}
\newtheorem{proposition}{Proposition}
\newtheorem{lemma}{Lemma}
\newtheorem{remark}{Remark}
\theoremstyle{definition}
\newcounter{algocount}
\newcounter{examplecount}
\newenvironment{example}{\refstepcounter{examplecount}\begin{trivlist}\item \textbf{Example \theexamplecount.}}{\end{trivlist}}
\newenvironment{algorithm}[1][]{\refstepcounter{algocount}\setlength{\parindent}{0.5cm}\begin{trivlist}\item \textbf{Algorithm \thealgocount.}#1\\[-0.2cm]\rule{\columnwidth}{1pt}}{\\[-0.2cm]\rule{\columnwidth}{1pt}\end{trivlist}}
\newcommand{\veczero}{\boldsymbol{0}}
\newcommand{\vecc}{\boldsymbol{c}}
\newcommand{\vecp}{\boldsymbol{p}}
\newcommand{\vecq}{\boldsymbol{q}}
\newcommand{\vect}{\boldsymbol{t}}
\newcommand{\vecmu}{\boldsymbol{\mu}}
\newcommand{\bpm}{\begin{pmatrix}}
\newcommand{\epm}{\end{pmatrix}}
\newcommand{\bbm}{\begin{bmatrix}}
\newcommand{\ebm}{\end{bmatrix}}
\DeclareMathOperator*{\argmin}{argmin}
\DeclareMathOperator{\kl}{\mathbb{D}}
\DeclareMathOperator*{\minimize}{minimize}
\DeclareMathOperator*{\st}{subject\;to}
\newcommand{\qtvd}{t^{\mathrm{vd}}}
\newcommand{\vecqtvd}{\boldsymbol{t}^{\mathrm{vd}}}
\newcommand{\qtaid}{t^{\mathrm{a}}}
\newcommand{\vecqtaid}{\boldsymbol{t}^{\mathrm{a}}}
\newcommand{\vecpvd}{\vecp^\mathrm{vd}}
\newcommand{\pvd}{p^\mathrm{vd}}
\newcommand{\KSet}{\mathcal{K}}
\newcommand{\Pmat}{\mathbf{P}}
\newcommand{\klr}{\overline{\kl}}
\newcommand{\Kvec}{\mathbf{c}}
\newcommand{\lone}[1]{\Vert #1 \Vert_1}
\newcommand{\crev}[1]{\textcolor{black}{#1}}
\title{Greedy Algorithms for Optimal Distribution Approximation}
\author{
\IEEEauthorblockN{
  Bernhard C. Geiger, Georg B\"ocherer}\\
\IEEEauthorblockA{
Institute for Communications Engineering, TU M\"unchen, Germany\\
geiger@ieee.org, georg.boecherer@tum.de}
}
\begin{document}

\maketitle

\begin{abstract}
The approximation of a discrete probability distribution $\vect$ by an $M$-type distribution $\vecp$ is considered. The approximation error is measured by the informational divergence $\kl(\vect\Vert\vecp)$, which is an appropriate measure, e.g., in the context of data compression. Properties of the optimal approximation are derived and bounds on the approximation error are presented, which are asymptotically tight. It is shown that $M$-type approximations that minimize either $\kl(\vect\Vert\vecp)$, or $\kl(\vecp\Vert\vect)$, or the variational distance $\lone{\vecp-\vect}$ can all be found by using specific instances of the same general greedy algorithm.
\end{abstract}


\section{Introduction}
In this work, we consider finite precision representations of probabilistic models. More precisely, if the original model, or \emph{target distribution}, has $n$ non-zero mass points and is given by $\vect:=(t_1,\dots,t_n)$, we wish to approximate it by a distribution $\vecp:=(p_1,\dots,p_n)$, where for every $i$, $p_i=c_i/M$ for some non-negative integer $c_i \le M$. The distribution $\vecp$ is called an \emph{$M$-type distribution}, and the positive integer $M\ge n$ is the \emph{precision} of the approximation. The problem is non-trivial, since computing the numerator $c_i$ by rounding $Mt_i$ to the nearest integer in general fails to yield a distribution.

$M$-type approximations have many practical applications, e.g., in political apportionments, $M$ seats in a parliament need to be distributed to $n$ parties according to the result of some vote $\vect$. This problem led, e.g., to the development of \emph{multiplier methods}~\cite{Dorfleitner_MultiplierMethods}. In communications engineering, example applications are finite precision implementations of probabilistic data compression \cite{rissanen1979arithmetic}, distribution matching \cite{schulte2016constant}, and finite-precision implementations of Bayesian networks~\cite{Druzdzel_BayesianNetworks,Tschiatschek_PrecisionBounds}. In all of these applications, the $M$-type approximation $\vecp$ should be close to the target distribution $\vect$ in the sense of an appropriate error measure. Common choices for this approximation error are the variational distance and the informational divergences:
\begin{subequations}\label{eq:costs}
 \begin{align}
   \lone{\vecp-\vect} &:= \sum_{i=1}^n |p_i-t_i|\label{eq:VD}\\
   \kl(\vecp\Vert\vect)&:=\sum_{i\colon p_i>0} p_i\log\frac{p_i}{t_i}\label{eq:IDsynthesis}\\
   \kl(\vect\Vert\vecp) &:= \sum_{i\colon t_i>0} t_i \log\frac{t_i}{p_i}\label{eq:IDanalysis}
 \end{align}
\end{subequations}
where $\log$ denotes the natural logarithm.

Variational distance and informational divergence~\eqref{eq:IDsynthesis} have been considered by Reznik~\cite{Reznik_Quantization} and B\"{o}cherer~\cite{Boecherer_SCC13}, respectively, who presented algorithms for optimal $M$-type approximation and developed bounds on the approximation error. In a recent manuscript~\cite{Geiger_OptimalQuantization}, we extended the existing works on \eqref{eq:VD} and \eqref{eq:IDsynthesis} to target distributions with infinite support ($n=\infty$) and refined the bounds from~\cite{Reznik_Quantization,Boecherer_SCC13}.

In this work, we focus on the approximation error \eqref{eq:IDanalysis}. It is an appropriate cost function for data compression \cite[Thm.~5.4.3]{Cover_Information2} and seems apropriate for the approximation of parameters in Bayesian networks (see Sec.~\ref{sec:outlook}). Nevertheless, to the best of the authors' knowledge, the characterization of $M$-type approximations minimizing $\kl(\vect\Vert\vecp)$ has not received much attention in literature so far.

Our contributions are as follows. In Sec.~\ref{sec:greedy}, we present an efficient greedy algorithm to find $M$-type distributions minimizing \eqref{eq:IDanalysis}. We then discuss in Sec.~\ref{sec:analysis} the properties of the optimal $M$-type approximation and bound the approximation error \eqref{eq:IDanalysis}. Our bound incorporates a reverse Pinsker inequality recently suggested in~\cite[Thm.~7]{Verdu_ITABound}. The algorithm we present is an instance of a greedy algorithm similar to \emph{steepest ascent hill climbing}~\cite[Ch.~2.6]{Michalewicz_ModernHeuristics}. As a byproduct, we unify this work with~\cite{Reznik_Quantization,Boecherer_SCC13,Geiger_OptimalQuantization} by showing that also the algorithms optimal w.r.t. variational distance \eqref{eq:VD} and informational divergence~\eqref{eq:IDsynthesis} are instances of the same general greedy algorithm, see Sec.~\ref{sec:greedy}.

\section{Greedy Optimization}\label{sec:greedy}
In this section, we define a class of problems that can be optimally solved by a greedy algorithm. Consider the following example:

\begin{example}\label{ex:jobs}
  Suppose there are $n$ queues with jobs, and you have to select $M$ jobs minimizing the total time spent. A greedy algorithm suggests to select always the job with the shortest duration, among the jobs that are at the front of their queues. If the jobs in each queue are ordered by increasing duration, then this greedy algorithm is optimal.
\end{example}

We now make this precise: Let $M$ be a positive integer, e.g.,  the number of jobs that have to be completed, and let $\delta_i{:}\ \mathbb{N}\to\mathbb{R}$, $i=1,\dotsc,n$, be a set of functions, e.g., $\delta_i(k)$ is the duration of the $k$-th job in the $i$-th queue. Let furthermore $\Kvec_0:=(c_{1,0},\dots,c_{n,0})\in\mathbb{N}_0^n$ be a \emph{pre-allocation}, representing a constraint that has to be fulfilled (e.g., in each queue at least one job has to be completed) or a chosen initialization. Then, the goal is to minimize 
\begin{equation}
 U(\Kvec) := \sum_{i=1}^n\sum_{k_i=c_{i,0}+1}^{c_i} \delta_i(k_i) \label{eq:general:cost}
\end{equation}
i.e., to find a \emph{final allocation} $\Kvec:=(c_1,\cdots,c_n)$ satisfying, for all $i$, $c_i\ge c_{i,0}$ and $\lone{\Kvec}=M$. A greedy method to obtain such a final allocation is presented in Algorithm~\ref{alg:GeneralAlgo}. We show in Appendix~\ref{proof:optAlgo} that this algorithm is optimal if the functions $\delta_i$ satisfy certain conditions:

\begin{figure}
\begin{algorithm}[~Greedy Algorithm]\label{alg:GeneralAlgo} 
 \\
Initialize $k_i=c_{i,0}$, $i=1,\dotsc,n$.\\
\textbf{repeat} $M-\lone{\Kvec_0}$ times\\
\indent Compute $\delta_i(k_i+1)$, $i=1,\dotsc,n$.\\
\indent Compute $j=\min\argmin_i\delta_i(k_i+1)$. \\ \indent\indent\crev{// (choose one minimal element)}\\
\indent Update $k_j\leftarrow k_j+1$.\\
\textbf{end repeat}\\
Return $\Kvec=(k_1,\dots,k_n)$.
\end{algorithm}
\end{figure}

\begin{proposition}\label{prop:optAlgo}
 If the functions $\delta_i(k)$ are non-decreasing in $k$, Algorithm~\ref{alg:GeneralAlgo} achieves a global minimum $U(\Kvec)$ for a given pre-allocation $\Kvec_0$ and a given $M$.
\end{proposition}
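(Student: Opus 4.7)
The plan is to use an exchange argument. Let $\Kvec^g$ denote the output of Algorithm~\ref{alg:GeneralAlgo} and let $\Kvec^*=(c_1^*,\ldots,c_n^*)$ be an arbitrary feasible allocation, i.e., $\lone{\Kvec^*}=M$ and $c_\ell^*\ge c_{\ell,0}$ for all $\ell$. If $\Kvec^*\neq\Kvec^g$, then since both vectors sum to $M$ there must exist indices $i,j$ with $c_i^*>c_i^g$ and $c_j^*<c_j^g$. I will show that the ``one-unit swap'' $\tilde\Kvec$ defined by $\tilde c_i=c_i^*-1$, $\tilde c_j=c_j^*+1$, and $\tilde c_\ell=c_\ell^*$ for the remaining indices is feasible and satisfies $U(\tilde\Kvec)\le U(\Kvec^*)$. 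Feasibility is immediate from $\tilde c_i\ge c_i^g\ge c_{i,0}$ and $\sum_\ell \tilde c_\ell=M$. Iterating this swap strictly decreases $\lone{\Kvec^*-\Kvec^g}$ without increasing $U$, so after finitely many iterations one reaches $\Kvec^g$, proving that it is a minimizer.

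The crux is the inequality $\delta_j(c_j^*+1)\le \delta_i(c_i^*)$, because then the definition~\eqref{eq:general:cost} directly yields $U(\tilde\Kvec)-U(\Kvec^*)=\delta_j(c_j^*+1)-\delta_i(c_i^*)\le 0$. To establish it, I would trace the greedy execution. Since $c_j^g>c_j^*$, there must be some iteration of Algorithm~\ref{alg:GeneralAlgo} at which $k_j$ is incremented from $c_j^*$ to $c_j^*+1$, and by the selection rule $\delta_j(c_j^*+1)\le \delta_i(k_i+1)$, where $k_i$ denotes the $i$-th counter at that moment. Because $k_i$ never exceeds its final value $c_i^g$ and $c_i^g<c_i^*$, we have $k_i+1\le c_i^*$; the non-decreasing hypothesis on $\delta_i$ then gives $\delta_i(k_i+1)\le \delta_i(c_i^*)$. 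Chaining the two inequalities proves the claim.

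I do not foresee a serious obstacle. The main detail worth watching is tie-breaking: Algorithm~\ref{alg:GeneralAlgo} picks the smallest index among the $\argmin$, but the exchange argument uses only a non-strict inequality at the greedy step, so ties cause no problem. The exchange procedure terminates because each swap decreases the non-negative integer $\lone{\Kvec^*-\Kvec^g}$ by exactly two, and the monotonicity hypothesis is used only in its stated non-strict form, matching the proposition.
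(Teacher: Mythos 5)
Your proof is correct, and it takes a genuinely different route from the paper's. The paper first identifies the optimal value with the sum of the $M-\lone{\Kvec_0}$ smallest elements of the multiset $\mathcal{D}=\{\delta_i(k)\}$, uses monotonicity to show this lower bound is attained by an allocation that takes a prefix of each queue, and then argues by contradiction that the greedy output cannot exceed it. You instead run a classical exchange argument: starting from an arbitrary feasible $\Kvec^*$, you repeatedly shift one unit toward the greedy allocation $\Kvec^g$ and show each shift cannot increase $U$. Your chain $\delta_j(c_j^*+1)\le\delta_i(k_i+1)\le\delta_i(c_i^*)$ is sound (the first inequality comes from the selection rule at the iteration where $k_j$ passes from $c_j^*$ to $c_j^*+1$, which exists because $c_{j,0}\le c_j^*<c_j^g$; the second from $k_i+1\le c_i^g+1\le c_i^*$ and monotonicity), the swap stays feasible because $\tilde c_i=c_i^*-1\ge c_i^g\ge c_{i,0}$, and termination follows since each swap reduces $\lone{\Kvec^*-\Kvec^g}$ by exactly two. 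What each approach buys: yours handles a nontrivial pre-allocation directly (the paper reduces to $\Kvec_0=\veczero$ ``w.l.o.g.'', which really requires reindexing the $\delta_i$) and replaces the paper's somewhat informal step asserting the existence of indices $\ell$ and $m$ with an explicit trace of the greedy execution; the paper's approach, in exchange, yields a clean closed-form characterization of the optimal value as the sum of the $M$ smallest increments, which your argument does not provide.
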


\begin{remark}
 \crev{The minimum of $U(\Kvec)$ may not be unique.}
\end{remark}

\begin{remark}\label{rem:convex}
 If a function $f_i{:}\ \mathbb{R}\to\mathbb{R}$ is convex, the difference $\delta_i(k)=f_i(k)-f_i(k-1)$ is non-decreasing in $k$. Hence, Algorithm~\ref{alg:GeneralAlgo} also minimizes
  \begin{equation}\label{eq:convexsum}
  U(\Kvec) = \sum_{i=1}^n f_i(c_i).
 \end{equation}
\end{remark}

Remark~\ref{rem:convex} connects Algorithm~\ref{alg:GeneralAlgo} to steepest ascent hill climbing~\cite[Ch.~2.6]{Michalewicz_ModernHeuristics} with fixed step size and a constrained number of $M$ steps. Hill climbing is optimal for convex problems, suggesting an interesting connection with Proposition~\ref{prop:optAlgo}.

We now show that instances of Algorithm~\ref{alg:GeneralAlgo} can find $M$-type approximations $\vecp$ minimizing each of the cost functions in~\eqref{eq:costs}. Noting that $p_i=c_i/M$ for some non-negative integer $c_i$, we can rewrite the cost functions as follows:
\begin{subequations}\label{eq:costs_rewritten}
 \begin{align}
   \lone{\vecp-\vect} &= \frac{1}{M}\sum_{i=1}^n |c_i-Mt_i|\label{eq:VD_rewritten}\\
   \kl(\vecp\Vert\vect)&=\frac{1}{M}\left(\sum_{i\colon c_i>0} c_i\log\frac{c_i}{t_i}\right) - \log M\label{eq:IDsynthesis_rewritten}\\
   \kl(\vect\Vert\vecp) &= \log M - H(\vect) - \sum_{i\colon t_i>0} t_i \log c_i.\label{eq:IDanalysis_rewritten}
 \end{align}
\end{subequations}
Ignoring constant terms, these cost functions are all instances of Remark~\ref{rem:convex} for convex functions $f_i{:}\ \mathbb{R}\to\mathbb{R}$ (see Table~\ref{tab:summary}). Hence, the three different $M$-type approximation problems set up by~\eqref{eq:costs} can all be solved by instances of Algorithm~\ref{alg:GeneralAlgo}, for a trivial pre-allocation $\Kvec_0=\veczero$ and after taking $M$ steps: The final allocation $\Kvec$ simply defines the $M$-type approximation by $p_i=c_i/M$. For variational distance optimal approximation, we showed in~\cite[Lem.~3]{Geiger_OptimalQuantization} that every optimal $M$-type approximation satisfies $p_i\ge \lfloor Mt_i\rfloor/M$, hence one may speed up the algorithm by pre-allocating $c_{i,0}=\lfloor Mt_i\rfloor$. We furthermore show in Lemma~\ref{lem:A:Element} below that the support of the optimal $M$-type approximation in terms of~\eqref{eq:IDanalysis} equals the support of $\vect$ (if $M$ is large enough). Assuming that $\vect$ is positive, one can pre-allocate the algorithm with $c_{i,0}=1$. We summarize these instantiations of Algorithm~\ref{alg:GeneralAlgo} in Table~\ref{tab:summary}.

\begin{table}[t]
\caption{Instances of Algorithm~\ref{alg:GeneralAlgo} Optimizing~\eqref{eq:costs}.}
\label{tab:summary}
\centering
 \begin{tabular}{c||m{1.5cm}|m{2.25cm}|>{\centering\arraybackslash}m{0.75cm}|m{0.8cm}}
  Cost & $f_i(x)$ & $\delta_i(k)$ & $c_{i,0}$ & Refs\\
  \hline
  $\lone{\vecp-\vect}$ %
  & $|x-Mt_i|$%
  & $|k-Mt_i|$ \newline $\quad-|k-1-Mt_i|$%
  & $\lfloor M t_i \rfloor$%
  & \cite{Reznik_Quantization,Geiger_OptimalQuantization}\\
  \hline
  $\kl(\vecp\Vert\vect)$%
  & $x\log(x/t_i)$%
  & {$k\log\frac{k}{k-1}$
    \newline
    $+\log(k-1)\newline-\log t_i$}%
  & $0$%
  & \cite{Boecherer_SCC13,Geiger_OptimalQuantization}\\
  \hline
  $\kl(\vect\Vert\vecp)$%
  & $-t_i\log x$%
  & $t_i\log((k-1)/k)$%
  & $\lceil t_i \rceil$%
  & This \newline work\\
  \hline
 \end{tabular}
\end{table}

This list of instances of Algorithm~\ref{alg:GeneralAlgo} minimizing information-theoretic or probabilistic cost functions can be extended. For example, the $\chi^2$-divergences $\chi^2(\vect||\vecp)$ and $\chi^2(\vecp||\vect)$ can also be minimized, since the functions inside the respective sums are convex. However, R\'{e}nyi divergences of orders $\alpha\neq 1$ cannot be minimized by applying Algorithm~\ref{alg:GeneralAlgo}.

\section{$M$-Type Approximation Minimizing $\kl(\vect\Vert\vecp)$: Properties and Bound}\label{sec:analysis}
As shown in the previous section, Algorithm~\ref{alg:GeneralAlgo} presents a minimizer of the problem $\min_{\vecp} \kl(\vect\Vert\vecp)$ if instantiated according to Table~\ref{tab:summary}. Let us call this minimizer $\vecqtaid$. Recall that $\vect$ is positive and that $M\ge n$. The support of $\vecqtaid$ must be at least as large as the support of $\vect$, since otherwise $\kl(\vect\Vert\vecqtaid)=\infty$. Note further that the costs $\delta_i(k)$ are negative if $t_i>0$ and zero if $t_i=0$; hence, if $t_i=0$, the index $i$ cannot be chosen by Algorithm~\ref{alg:GeneralAlgo}, thus also $\qtaid_i=0$. This proves

\begin{lemma}\label{lem:A:Element}
If $M\ge n$, the supports of $\vect$ and $\vecqtaid$ coincide, i.e., $t_i=0\Leftrightarrow \qtaid_i=0$.
\end{lemma}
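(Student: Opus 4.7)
The plan is to prove the two directions of the equivalence separately, leveraging the fact that $\vecqtaid$ is produced by Algorithm~\ref{alg:GeneralAlgo} instantiated per the third row of Table~\ref{tab:summary}, with $\delta_i(k) = t_i \log((k-1)/k)$ and pre-allocation $c_{i,0} = \lceil t_i \rceil$.

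For the direction $t_i = 0 \Rightarrow \qtaid_i = 0$, I would read off the signs of $\delta_i(k)$ directly. Under the convention $0 \log 0 = 0$, the cost $\delta_i(k)$ evaluates to $0$ whenever $t_i = 0$, whereas for any $j$ with $t_j > 0$ and $k \geq 2$ one has $\log((k-1)/k) < 0$ and hence $\delta_j(k) < 0$. Because the pre-allocation sets $c_{j,0} = \lceil t_j \rceil = 1$ at every index with $t_j > 0$, the candidate increments examined by the algorithm are always of the form $\delta_j(k_j+1)$ with $k_j+1 \geq 2$, and are thus strictly negative. The $\argmin$ step therefore never selects an index $i$ with $t_i = 0$, and the count at such indices remains at $c_{i,0} = \lceil 0 \rceil = 0$.

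For the reverse direction $\qtaid_i = 0 \Rightarrow t_i = 0$, I would argue by contradiction using optimality. If some $i$ had $t_i > 0$ and $\qtaid_i = 0$, then $\kl(\vect\Vert\vecqtaid) = +\infty$. But since $M \geq n$, one can exhibit a strictly positive $M$-type distribution on $\supp(\vect)$, e.g., by allocating $c_j = 1$ to every index with $t_j > 0$ and distributing the remaining $M-n$ counts arbitrarily within $\supp(\vect)$. This candidate achieves finite divergence, contradicting the minimality of $\vecqtaid$.

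Both directions are short, so the main subtlety is not a hard calculation but rather the role of the hypothesis $M \geq n$: it is precisely what ensures that a positive $M$-type distribution exists in the second step, without which the lemma itself would fail. Everything else reduces to inspecting the signs of $\delta_i(k)$ from Table~\ref{tab:summary}.
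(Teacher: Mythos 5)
Your proof is correct and takes essentially the same approach as the paper's: the forward direction by inspecting the sign of $\delta_i(k)$ (zero when $t_i=0$, strictly negative at every candidate step when $t_i>0$, so such indices are never selected), and the reverse direction by noting that $M\ge n$ guarantees a finite-divergence competitor so that an optimal $\vecqtaid$ cannot vanish where $t_i>0$. The paper states the same two observations, only more tersely, in the paragraph immediately preceding the lemma.
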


The assumption that $\vect$ is positive and that $M\ge n$ hence comes without loss of generality. In contrast, neither variational distance nor informational divergence~\eqref{eq:IDsynthesis} require $M\ge n$: As we show in~\cite{Geiger_OptimalQuantization}, the $M$-type approximation problem remains interesting even if $M<n$.

Lemma~\ref{lem:A:Element} explains why the optimal $M$-type approximation does not necessarily result in a ``small'' approximation error:
\begin{example}
Let $\vect=(1-\varepsilon,\frac{\varepsilon}{n-1},\dots,\frac{\varepsilon}{n-1})$ and $M=n$, hence by Lemma~\ref{lem:A:Element}, $\vecqtaid=\frac{1}{n}(1,1,\dots,1)$. It follows that $\kl(\vect\Vert\vecqtaid)=\log n - H(\vect)$, which can be made arbitrarily close to $\log n$ by choosing a small positive $\varepsilon$.
\end{example}

In Table~\ref{tab:summary} we made use of \cite[Lem.~3]{Geiger_OptimalQuantization}, which says that every $\vecp$ minimizing the variational distance $\lone{\vecp-\vect}$ satisfies $p_i\ge \lfloor Mt_i\rfloor/M$, to speed up the corresponding instance of Algorithm~\ref{alg:GeneralAlgo} by proper pre-allocation. Initialization by rounding is not possible when minimizing $\kl(\vect\Vert\vecp)$, as shown in the following two examples:

\begin{example}\label{ex:A:noMEntry}\label{ex:A:noRoundOff}
 Let $\vect=(17/20, 3/40, 3/40)$ and $M=20$. The optimal $M$-type approximation is $\vecp=(8/10,1/10,1/10)$, hence $p_1<\lfloor Mt_1\rfloor/M$. Initialization via rounding off fails.
\end{example}

\begin{example}\label{ex:A:noRoundUp}
 Let $\vect=(0.719, 0.145, 0.088, 0.048)$ and $M=50$. The optimal $M$-type approximation is $\vecp=(0.74,    0.14,    0.08,   0.04)$, hence $p_1>\lceil Mt_1\rceil/M$. Initialization via rounding up fails.
\end{example}

To show that informational divergence vanishes for $M\to\infty$, assume that $M>1/t_i$ for all $i$. Hence, the variational distance optimal approximation $\vecqtvd$ has the same support as $\vect$, which ensures that $\kl(\vect\Vert \vecqtvd)<\infty$. By similar arguments as in the proof of~\cite[Prop.~4.1)]{Geiger_OptimalQuantization}, we obtain
 \begin{equation}
  \kl(\vect\Vert\vecqtaid) \le \kl(\vect\Vert \vecqtvd) \le\log\left(1+\frac{n}{2M}\right) \stackrel{M\to\infty}{\longrightarrow}0.\label{eq:boundComparison}
 \end{equation}

We now develop an upper bound on $\kl(\vect\Vert\vecqtaid)$ that holds for every $M$. To this end, we first approximate $\vect$ by a distribution $\vect^*$ in $\mathcal{P}_M:=\{\vecp\colon\forall i{:}\ p_i\ge 1/M, \Vert\vecp\Vert_1=1\}$ that minimizes $\kl(\vect\Vert\vect^*)$. If $\vect^*$ is unique, then it is called the \emph{reverse $I$-projection}~\cite[Sec.~I.A]{Csiszar_IProjections} of $\vect$ onto $\mathcal{P}_M$. Since $\vect^*\in\mathcal{P}_M$, its variational distance optimal approximation $\vecqtvd$ has the same support as $\vect$, which allows us to bound $\kl(\vect\Vert\vecqtaid)$ by $\kl(\vect\Vert\vecqtvd)$.

\begin{lemma}\label{lem:Iproj}
Let $\vect^*\in\mathcal{P}_M$ minimize $\kl(\vect\Vert\vect^*)$. Then,
 \begin{equation}
  t^*_i := \frac{t_i}{\nu(M)} + \left(\frac{1}{M}-\frac{t_i}{\nu(M)}\right)^+\label{eq:vdef}
 \end{equation}
where $\nu(M)$ is such that $\Vert\vect^*\Vert_1=1$, and where $(x)^+ := \max\{0,x\}$.
\end{lemma}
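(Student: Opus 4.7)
The plan is to recognize that $\vect^*$ is the minimizer of a strictly convex function (in its second argument) over the convex set $\mathcal{P}_M$, and then to characterize it via the KKT conditions. Existence and uniqueness of $\vect^*$ follow from the fact that $\vecp\mapsto \kl(\vect\Vert\vecp)$ is strictly convex on the interior of the simplex and $\mathcal{P}_M$ is a nonempty (since $M\ge n$) compact convex subset of that interior; recall the standing assumption that $\vect$ is positive, so no issue with $0\log 0$ arises.

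First I would form the Lagrangian
\begin{equation}
L(\vect^*,\lambda,\vecmu) = \sum_{i=1}^n t_i\log\frac{t_i}{t_i^*} + \lambda\Bigl(\sum_{i=1}^n t_i^*-1\Bigr) - \sum_{i=1}^n \mu_i\Bigl(t_i^*-\frac{1}{M}\Bigr),
\end{equation}
with $\mu_i\ge 0$, and write stationarity in each coordinate as $-t_i/t_i^* + \lambda - \mu_i = 0$, together with complementary slackness $\mu_i(t_i^*-1/M)=0$ and primal feasibility $t_i^*\ge 1/M$, $\sum_i t_i^* = 1$. Since the objective is convex and the constraints are affine, the KKT conditions are both necessary and sufficient.

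Next I would split on the two KKT cases. If $\mu_i=0$ then $t_i^* = t_i/\lambda$, which requires $t_i/\lambda \ge 1/M$, i.e.\ $t_i\ge \lambda/M$. If $\mu_i>0$ then $t_i^*=1/M$, and plugging this into stationarity gives $\mu_i = \lambda-Mt_i>0$, i.e.\ $t_i<\lambda/M$. Combining the two cases yields
\begin{equation}
t_i^* = \max\Bigl\{\frac{t_i}{\lambda},\frac{1}{M}\Bigr\} = \frac{t_i}{\lambda}+\Bigl(\frac{1}{M}-\frac{t_i}{\lambda}\Bigr)^{+},
\end{equation}
which is the claimed form with $\nu(M):=\lambda$.

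Finally, I would justify that $\nu(M)$ is pinned down uniquely by the normalization constraint. For fixed $\vect$, the map $\lambda\mapsto \sum_i \max\{t_i/\lambda,1/M\}$ is continuous and strictly decreasing on the range of $\lambda$ where at least one summand equals $t_i/\lambda$; it equals $1$ at exactly one $\lambda>0$, which we denote $\nu(M)$. The main technical point, and the only real obstacle, is this KKT/case-analysis bookkeeping; once the cases are sorted out, the formula \eqref{eq:vdef} drops out immediately.
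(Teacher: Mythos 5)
Your proposal is correct and follows essentially the same route as the paper: both set up the convex program $\min -\sum_i t_i\log p_i$ over $\mathcal{P}_M$ and read off the formula from the KKT stationarity and complementary slackness conditions (your $\lambda$ plays the role of the paper's $\nu$, and your $\mu_i$ the paper's $\lambda_i$). Your added remarks on existence, uniqueness, and the monotonicity of $\lambda\mapsto\sum_i\max\{t_i/\lambda,1/M\}$ go slightly beyond what the paper records but are consistent with it (modulo the degenerate case $M=n$, where the normalizing $\lambda$ is not unique even though $\vect^*$ is).
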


\begin{proof}
 See Appendix~\ref{proof:Iproj}.
\end{proof}

 Let $\KSet:=\{i: t_i<\nu(M)/M\}$, $k:=|\KSet|$, and $T_{\KSet}:=\sum_{i\in\KSet} t_i$. The parameter $\nu$ must scale the mass $(1-T_{\KSet})$ such that it equals $(M-k)/M$, i.e., we have
\begin{align}\label{eq:boundOnNu}
 \nu = \frac{1-T_{\KSet}}{1-\frac{k}{M}}. 
\end{align}
If, for all $i$, $t_i>1/M$, then $\vect\in\mathcal{P}_M$, hence $\vect^*=\vect$ is feasible and $\nu(M)=1$. One can show that $\nu(M)$ decreases with $M$.

\begin{proposition}[Approximation Bounds]\label{prop:A:Bound}
 \begin{align}
  \kl(\vect\Vert\vecqtaid) &\le \log\nu(M)+\frac{\log(2)}{2}\left(1-\nu(M)\left(1-\frac{n}{M}\right)\right) \label{eq:A:Bound}
 \end{align}
\end{proposition}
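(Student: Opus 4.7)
The plan is to bound $\kl(\vect\Vert\vecqtaid)$ by passing through the reverse I-projection $\vect^*$ of Lemma~\ref{lem:Iproj}. Let $\vecqtvd$ be a VD-optimal $M$-type approximation of $\vect^*$. Since $\vect^*\in\mathcal{P}_M$ satisfies $Mt^*_i\ge 1$ for every $i$, the structural characterization of VD-optimal roundings (\cite[Lem.~3]{Geiger_OptimalQuantization}, \cite{Reznik_Quantization}) gives $\qtvd_i\ge 1/M$; in particular $\supp(\vecqtvd)=\supp(\vect)$, so $\kl(\vect\Vert\vecqtvd)<\infty$, and by optimality of $\vecqtaid$ we have $\kl(\vect\Vert\vecqtaid)\le\kl(\vect\Vert\vecqtvd)$.

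Next I would split via
\begin{equation*}
\kl(\vect\Vert\vecqtvd)=\kl(\vect\Vert\vect^*)+\sum_i t_i\log(t^*_i/\qtvd_i)
\end{equation*}
and simplify the cross-sum using Lemma~\ref{lem:Iproj}. On $\KSet$ we have $t^*_i=1/M$, and since $Mt^*_i=1$ is already an integer the VD-optimal rounding must return $\qtvd_i=t^*_i$, so those terms vanish. Off $\KSet$ we have $t_i=\nu(M)\,t^*_i$, so the cross-sum reduces to $\nu(M)\,\kl(\vect^*\Vert\vecqtvd)$. This yields the clean inequality
\begin{equation*}
\kl(\vect\Vert\vecqtaid)\le\kl(\vect\Vert\vect^*)+\nu(M)\,\kl(\vect^*\Vert\vecqtvd),
\end{equation*}
reducing the proposition to bounding each of the two summands separately.

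For the first summand, substituting the formula for $t^*_i$ into $\sum_i t_i\log(t_i/t^*_i)$ and using that $Mt_i<\nu(M)$ on $\KSet$ directly gives $\kl(\vect\Vert\vect^*)\le\log\nu(M)$, the first term of the claimed bound. For the second summand, I would invoke the reverse Pinsker inequality \cite[Thm.~7]{Verdu_ITABound} on the pair $(\vect^*,\vecqtvd)$, exploiting $\qtvd_{\min}\ge 1/M$ together with the fact that $\vect^*$ and $\vecqtvd$ agree on $\KSet$, so that only $n-k$ coordinates contribute to the variational mass. The identity $\nu(M)(1-k/M)=1-T_{\KSet}$ from \eqref{eq:boundOnNu} then rewrites the factor $\nu(M)(n-k)/M$ produced this way as $1-\nu(M)(1-n/M)$. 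The main technical hurdle I anticipate is precisely this last step: obtaining the sharp constant $\log(2)/2$ is tighter than what the elementary $\log(1+x)\le x$ used in \eqref{eq:boundComparison} would give, and calls for a careful application of Verdu's inequality that simultaneously exploits the lower bound $\qtvd_{\min}\ge 1/M$ and the reduced effective support of the rounding error. Adding the two bounds yields the proposition.
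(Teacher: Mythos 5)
Your overall route is the paper's route: pass to the reverse $I$-projection $\vect^*$, bound $\kl(\vect\Vert\vecqtaid)$ by $\kl(\vect\Vert\vecqtvd)$, establish the exact Pythagorean-type decomposition $\kl(\vect\Vert\vecqtvd)=\kl(\vect\Vert\vect^*)+\nu\,\kl(\vect^*\Vert\vecqtvd)$ using $t^*_i=\qtvd_i=1/M$ on $\KSet$ and $t_i=\nu t^*_i$ off $\KSet$, bound the first summand by $\log\nu$, and handle the second via the reverse Pinsker inequality with ratio $r<2$ (which follows from $\qtvd_i\ge 1/M$ and $|t^*_i-\qtvd_i|<1/M$). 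All of this matches the paper's Lemmas~\ref{lem:pythagoras}--\ref{lem:nubound}, and your algebraic rewriting of $\nu(n-k)/M$ into $1-\nu(1-n/M)$ via \eqref{eq:boundOnNu} is correct (it is an inequality after dropping $T_{\KSet}\ge 0$, equivalently the paper's bound $k\ge M-M/\nu$).

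The genuine gap is in the step you flag as the ``main technical hurdle,'' and you have misdiagnosed where the difficulty lies. The reverse Pinsker application is routine and yields the constant $\log 2$, not $\log(2)/2$; what remains is to bound the variational distance itself, and your argument --- ``only $n-k$ coordinates contribute'' together with the per-coordinate bound $|t^*_i-\qtvd_i|<1/M$ --- gives only $\lone{\vect^*-\vecqtvd}\le (n-k)/M$, which is off by a factor of $2$ from what the proposition requires. The missing ingredient is the bound $\lone{\vect^*-\vecqtvd}\le (n-k)/(2M)$, which the paper supplies as a separate lemma (Lemma~\ref{lem:vdbound}) on VD-optimal $M$-type roundings of \emph{sub-probability} vectors: one writes the rounding errors $e_i=t^*_i-\lfloor Mt^*_i\rfloor/M$, notes that the VD-optimal allocation assigns the $L$ residual mass units to the $L$ largest errors, and optimizes the resulting expression over $L$; the maximum at $L=(n-k)/2$ is what produces the extra factor $1/2$. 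No refinement of Verd\'u's inequality can substitute for this, since the factor comes entirely from the structure of the optimal rounding, not from the divergence--distance comparison. (A minor additional point: since $\vect^*$ and $\vecqtvd$ restricted to the complement of $\KSet$ carry total mass $1-k/M<1$, the case $J=M$ of the previously known variational-distance bound does not apply directly, which is why the paper proves the generalized sub-probability version.)
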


\begin{proof}
 See Appendix~\ref{proof:A:bound}.
\end{proof}

The first term on the right-hand side of \eqref{eq:A:Bound} accounts for the error caused by first approximating $\vect$ by $\vect^*$ (in the sense of Lemma~\ref{lem:Iproj}). The second term then accounts for the additional error caused by the $M$-type approximation of $\vect^*$. The bound incorporates the reverse Pinsker inequality~\cite[Thm.~7]{Verdu_ITABound}, see Appendix~\ref{proof:A:bound}. If $M>t_i$ for every $i$, hence $\vect\in\mathcal{P}_M$, then $\nu(M)=1$ and only the second term remains. For large $M$,~\eqref{eq:A:Bound} yields better results than~\eqref{eq:boundComparison}, justifying this bound on the whole range of $M$. We illustrate the bounds for an example in Fig.~\ref{fig:bounds}.

\section{Outlook}\label{sec:outlook}
A possible application for our algorithms is the $M$-type approximation of Markov models, i.e., approximating the transition matrix $\mathbf{T}$ of an $n$-state, irreducible Markov chain with invariant distribution vectors $\vecmu$ by a transition matrix $\Pmat$ containing only $M$-type probabilities. Generalizing~\eqref{eq:IDanalysis}, the approximation error can be measured by the informational divergence rate~\cite{Rached_KLDR}
\begin{equation}
 \klr(\mathbf{T}\Vert\Pmat) := \sum_{i,j=1}^n \mu_iT_{ij} \log\frac{T_{ij}}{P_{ij}} 
 = \sum_{i=1}^n \mu_i \kl(\vect_i\Vert\vecp_i)
\end{equation}
The optimal $M$-type approximation is found by applying the instance of Algorithm~\ref{alg:GeneralAlgo} to each row separately, and Lemma~\ref{lem:A:Element} ensures that the transition graph of $\Pmat$ equals that of $\mathbf{T}$, i.e., the approximating Markov chain is irreducible. Future work shall extend this analysis to hidden Markov models and should investigate the performance of these algorithms in practical scenarios, e.g., in speech processing.

\pgfplotsset{select coords between index/.style 2 args={
    x filter/.code={
        \ifnum\coordindex<#1\def\pgfmathresult{}\fi
        \ifnum\coordindex>#2\def\pgfmathresult{}\fi
    }
}}
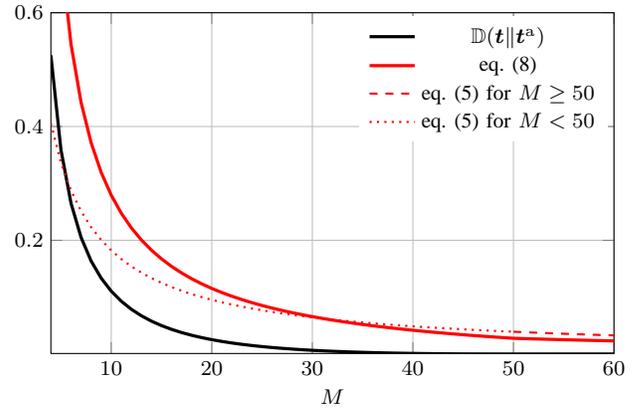
\begin{figure}[t]
 \centering
 \footnotesize
 \begin{tikzpicture}
\begin{axis}[
width=0.5\textwidth,
height=0.25\textheight,
xmin = 4,
ymin = 0.001,
xmax = 60,
ymax = 0.6,
xlabel={$M$},
ylabel={},
grid=both,
xlabel near ticks,
ylabel near ticks,
legend columns=1,
legend style={at={(0.99,0.99)},anchor=north east,draw=none},
legend entries = {{$\kl(\vect\Vert\vecqtaid)$},{eq.~\eqref{eq:A:Bound}},{eq.~\eqref{eq:boundComparison} for $M\ge 50$},{eq.~\eqref{eq:boundComparison} for $M< 50$}},
]
\addplot[black,no markers,very thick,solid]
table{true.dat};
\addplot[red,no markers,very thick,solid]
table{bound.dat};
\addplot[red,no markers, thick,solid,dashed,select coords between index={46}{56}]
table{no_bound.dat};
\addplot[red,no markers, thick,solid,dotted,select coords between index={0}{46}]
table{no_bound.dat};
\end{axis}
\end{tikzpicture}
 \caption{Evaluating the bounds~\eqref{eq:A:Bound} and~\eqref{eq:boundComparison} for $\vect=(0.48,0.48,0.02,0.02)$. Note that~\eqref{eq:boundComparison} is a valid bound only for $M\ge 50$, i.e., where the curve is dashed.}
 \label{fig:bounds}
\end{figure}

Another possible application is the approximation of Bayesian network parameters. The authors of~\cite{Druzdzel_BayesianNetworks} approximated the true parameters using a stationary multiplier method from~\cite{Heinrich_Multiplier}. Since rounding probabilities to zero led to bad classification performance, they replaced zeros in the approximating distribution afterwards by small values; thus, they also approximated true zero probabilities by positive ones. We believe that these problems can be removed by instantiating Algorithm~\ref{alg:GeneralAlgo} for cost~\eqref{eq:IDanalysis}. This automatically prevents approximating non-zero probabilities with zeros and vice-versa, see Lemma~\ref{lem:A:Element}.

Recent work suggested rounding \emph{log-probabilities}, i.e., to approximate $\log t_i$ by $\log p_i = -c_i/M$ for a non-negative integer $c_i$~\cite{Tschiatschek_PrecisionBounds}. Finding an optimal approximation that corresponds to a true distribution is equivalent to solving
\begin{align*}
 &\min\ d(\vect,\vecp)\\
 & \text{s.t.} \ \Vert e^{-\vecc} \Vert_{1/M}=1
\end{align*}
where $d(\cdot,\cdot)$ denotes any of the considered cost functions~\eqref{eq:costs}. If $M=1$ and $d(\vect,\vecp)=\kl(\vect\Vert\vecp)$ using the binary logarithm, the constraint translates to the requirement that $\vect$ is approximated by a complete binary tree. Then, the optimal approximation is the Huffman code for $\vect$.

\section*{Acknowledgements}
The work of Bernhard C. Geiger was partially funded by the Erwin Schr\"odinger Fellowship J 3765 of the Austrian Science Fund. The work of Georg B\"ocherer was partly supported by the German Ministry of Education and Research in the framework of an Alexander von Humboldt Professorship.

\appendix
\section{Proofs}

\subsection{Proof of Proposition~\ref{prop:optAlgo}}\label{proof:optAlgo}
 Since a pre-allocation only fixes a lower bound for $U(\Kvec)$, w.l.o.g.\ we assume that $\Kvec_{0}=\veczero$ and thus $\Kvec\in\mathbb{N}_0^n$ with $\lone{\Kvec}=M$. Consider the set $\mathcal{D}:=\{\delta_i(k_i){:}\ k_i\in\mathbb{N},i=1,\dots,n\}$ and assume that the (not necessarily unique) set $\mathcal{D}_M$ consists of $M$ smallest values in $\mathcal{D}$, i.e., $|\mathcal{D}_M|=M$ and
 \begin{equation}\label{eq:smallset}
  \forall d\in\mathcal{D}_M,d'\in\mathcal{D}\setminus\mathcal{D}_M{:} \quad d\le d'.
 \end{equation}
 
 Clearly, $U(\Kvec)$ cannot be smaller than the sum over all elements in $\mathcal{D}_M$.  Since the $\delta_i$ are non-decreasing, there exists at least one final allocation $\Kvec$ that takes successively the first $c_i$ values from each queue $i$, i.e., $\mathcal{D}_M=\{\delta_1(1),\dots,\delta_1(c_1),\dots,\delta_n(1),\dots,\delta_n(c_n)\}$ satisfies~\eqref{eq:smallset}. This shows that the lower bound induced by \eqref{eq:smallset} can actually be achieved.
 
 \crev{We prove the optimality of Algorithm~\ref{alg:GeneralAlgo} by contradiction:}
 Assume that Algorithm~\ref{alg:GeneralAlgo} finishes with a final allocation $\tilde{\Kvec}$ such that $U(\tilde{\Kvec})$ is strictly larger than the \crev{(unique)} sum over all elements in \crev{(non-unique)} $\mathcal{D}_M$. Hence, $\tilde\Kvec$ must exchange at least one of the elements in $\mathcal{D}_M$ for an element that is strictly larger. Thus, by the properties of the functions $\delta_i$ and Algorithm~\ref{alg:GeneralAlgo}, there must be indices $\ell$ and $m$ such that $\tilde{c}_\ell>c_\ell$, $\tilde{c}_m<c_m$, and $\delta_\ell(\tilde{c}_\ell)\ge \delta_\ell(c_\ell+1)>\delta_m(c_m)\ge\delta_m(\tilde{c}_m)$. At each iteration of the algorithm, the current allocation at index $m$ satisfies $k_m\le \tilde{c}_m<c_m$. Since $\delta_m(c_m)<\delta_\ell(c_\ell+1)$, $\delta_\ell(c_\ell+1)$ can never be a minimal element, and hence is not chosen by Algorithm~\ref{alg:GeneralAlgo}. This contradicts the assumption that Algorithm~\ref{alg:GeneralAlgo} finishes with a $\tilde{\Kvec}$ such that $U(\tilde{\Kvec})$ is strictly larger than the sum of $\mathcal{D}$'s $M$ smallest values. \qed

\subsection{Proof of Lemma~\ref{lem:Iproj}} \label{proof:Iproj}
The problem finding \crev{a $\vect^*\in\mathcal{P}_M$ minimizing $\kl(\vect\Vert\vect^*)$ is equivalent to finding an optimal point of the problem:}  
\begin{subequations}
 \begin{align}
\minimize_{\vecp\in\mathbb{R}^n_{>0}}	&\quad -\sum_{i=1}^n t_i\log p_i\\	
\st			&\quad  \frac{1}{M}-p_i\leq 0,\quad i=1,2,\dotsc,n\label{eq:cond1}\\
			&\quad -1+\sum_{i=1}^n p_i=0
\end{align}
\end{subequations}
The Lagrangian of the problem is
\begin{multline}
L(\vecp,\boldsymbol{\lambda},\nu)=-\sum_{i=1}^n t_i\log p_i\\+\sum_{i=1}^n \lambda_i\left(\frac{1}{M}-p_i\right)+\nu\left(-1+\sum_{i=1}^n p_i\right).
\end{multline}
By the KKT conditions \cite[Ch.~5.5.3]{Boyd_CVX04}, a feasible point $\vect^*$ is optimal if, for every $i=1,\dots,n$,
\begin{subequations}
 \begin{align}
\lambda_i&\geq 0\\
\lambda_i\left(\frac{1}{M}-t_i^*\right)&=0\label{eq:slackness}\\
\frac{\partial}{\partial p_i}L(\vecp,\boldsymbol{\lambda},\nu)\vert_{\vecp=\vect^*}=-\frac{t_i}{t_i^*}-\lambda_i+\nu&=0\label{eq:optimal}
\end{align}
\end{subequations}
By \eqref{eq:cond1}, we have $t_i^*\geq 1/M$. If $t_i^*>1/M$, then $\lambda_i=0$ by~\eqref{eq:slackness} and $t_i^*=t_i/\nu$ by~\eqref{eq:optimal}. Thus
\begin{align}
t_i^* = \frac{t_i}{\nu}+\left(\frac{1}{M}-\frac{t_i}{\nu}\right)^+
\end{align}
where $\nu$ is such that $\sum_{i=1}^n t_i^*=1$.\qed

\subsection{Proof of Proposition~\ref{prop:A:Bound}} \label{proof:A:bound}
Reverse $I$-projections admit a Pythagorean inequality~\cite[Thm.~1]{Csiszar_IProjections}. In other words, if $\vecp$ is a distribution, $\vecp^*$ its reverse $I$-projection onto a set $\mathcal{S}$, and $\vecq$ any distribution in $\mathcal{S}$, then
\begin{equation}
 \kl(\vecp\Vert\vecq) \ge \kl(\vecp\Vert\vecp^*) + \kl(\vecp^*\Vert\vecq).
\end{equation}
For the present scenario, we can show an even stronger result:

\begin{lemma}\label{lem:pythagoras}
 Let $\vect$ be the target distribution, let $\vect^*$ be as in Lemma~\ref{lem:Iproj}, and let $\vecqtvd$ be the variational distance optimal $M$-type approximation of $\vect^*$. Then,
 \begin{equation}
  \kl(\vect\Vert\vecqtvd) = \kl(\vect\Vert\vect^*) + \nu\kl(\vect^*\Vert\vecqtvd).
 \end{equation}
\end{lemma}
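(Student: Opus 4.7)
The plan is to expand everything in coordinates and reduce the claimed equality to a sum whose terms vanish for one of two complementary reasons: the coefficients vanish outside $\KSet := \{i : t_i < \nu/M\}$, and the log-ratios vanish inside $\KSet$. Subtracting the right-hand side from the left and collecting terms yields
$$\kl(\vect\Vert\vecqtvd) - \kl(\vect\Vert\vect^*) - \nu\kl(\vect^*\Vert\vecqtvd) = \sum_i (t_i - \nu t_i^*)\log\frac{t_i^*}{\qtvd_i},$$
so the lemma reduces to showing that every summand on the right is zero.

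By Lemma~\ref{lem:Iproj}, $t_i^* = t_i/\nu$ for $i \notin \KSet$ and $t_i^* = 1/M$ for $i \in \KSet$. The first case immediately kills the coefficient $t_i - \nu t_i^*$. The second case forces us to establish $\qtvd_i = 1/M$ on $\KSet$, which is the only nontrivial step. I would prove this by an exchange argument: assume, for contradiction, that $\qtvd_\ell \neq 1/M$ for some $\ell \in \KSet$; the lower bound $\qtvd_\ell \geq \lfloor M t_\ell^*\rfloor/M = 1/M$ recalled from~\cite[Lem.~3]{Geiger_OptimalQuantization} then forces $\qtvd_\ell \geq 2/M$. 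Since $\vecqtvd$ and $\vect^*$ both sum to one, some index $m$ must satisfy $\qtvd_m < t_m^*$, and necessarily $m \notin \KSet$ (because for $i \in \KSet$ the same lower bound already gives $\qtvd_i \geq t_i^*$). Transferring mass $1/M$ from $\qtvd_\ell$ down to $\qtvd_m$ produces another $M$-type candidate with a strictly smaller variational distance to $\vect^*$, contradicting the optimality of $\vecqtvd$.

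The main obstacle is handling the sub-case $\qtvd_m + 1/M > t_m^*$ of the exchange, where the shifted mass overshoots $t_m^*$; a direct computation shows that the net change in variational distance equals $-1/M + \bigl(2(\qtvd_m - t_m^*) + 1/M\bigr) = 2(\qtvd_m - t_m^*)$, which is still strictly negative because $\qtvd_m < t_m^*$ (the complementary sub-case $\qtvd_m + 1/M \leq t_m^*$ yields a clean change of $-2/M$). Once $\qtvd_i = t_i^* = 1/M$ is established for every $i \in \KSet$, each summand of the displayed expression vanishes and the identity follows.
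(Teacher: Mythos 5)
Your proof is correct and follows essentially the same route as the paper: the identity you reduce to is exactly the paper's telescoping decomposition $\sum_i t_i\log\frac{t_i}{\qtvd_i}=\sum_i t_i\log\frac{t_i}{t_i^*}+\sum_i t_i\log\frac{t_i^*}{\qtvd_i}$, and both arguments hinge on the same two facts ($t_i^*=t_i/\nu$ off $\KSet$, and $t_i^*=\qtvd_i=1/M$ on $\KSet$). The only difference is that you justify $\qtvd_i=1/M$ on $\KSet$ by an exchange argument, a fact the paper asserts without proof; a shorter route to it is to combine the cited bound $|t_i^*-\qtvd_i|<1/M$ with the observation that $\qtvd_i$ is a multiple of $1/M$ and $t_i^*=1/M$.
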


\begin{proof}
 \begin{align}
    \kl(\vect\Vert\vecqtvd) &= \sum_{i=1}^n t_i\log\frac{t_i}{\qtvd_i}\\
    &= \sum_{i=1}^n t_i\log\frac{t_i t^*_i}{\qtvd_i t^*_i}\\
    &= \sum_{i=1}^n t_i\log\frac{t_i}{t^*_i} + \sum_{i=1}^n t_i\log\frac{t^*_i}{\qtvd_i}\\
    &\stackrel{(a)}{=} \kl(\vect\Vert\vect^*) + \nu\sum_{i\notin\KSet} \frac{t_i}{\nu}\log\frac{t^*_i}{\qtvd_i}\\
    &\stackrel{(b)}{=} \kl(\vect\Vert\vect^*)+ \nu\kl(\vect^*\Vert\vecqtvd)
 \end{align}
where $(a)$ follows because for $i\in\KSet$, $t^*_i=\qtvd_i=1/M$ and $(b)$ is because for $i\notin\KSet$, $t^*_i=t_i/\nu$.
\end{proof}

We now bound the summands in Lemma~\ref{lem:pythagoras}.

\begin{lemma}\label{lem:boundIDviaVD}
In the setting of Lemma~\ref{lem:pythagoras},
 \begin{equation}
  \kl(\vect^*\Vert\vecqtvd)\le \log(2)\lone{\vect^*-\vecqtvd}.
 \end{equation}
\end{lemma}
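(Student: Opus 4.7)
I would reduce Lemma~\ref{lem:boundIDviaVD} to a one-dimensional inequality and then close it using the fact that both $\vect^*$ and $\vecqtvd$ are probability distributions, so that a certain linear-in-$\beta$ slack sums to zero.

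First I would collect the structural facts that make the bound tractable. Since $\vect^* \in \mathcal{P}_M$, every $t^*_i \ge 1/M$. Because $\vecqtvd$ is the variational-distance optimal $M$-type approximation of $\vect^*$ (by the same argument that yielded $p_i \ge \lfloor Mt_i\rfloor/M$ in \cite[Lem.~3]{Geiger_OptimalQuantization}, together with its counterpart $p_i \le \lceil Mt_i\rceil/M$), each $\qtvd_i$ equals either $\lfloor Mt^*_i\rfloor/M$ or $\lceil Mt^*_i\rceil/M$. Hence $\qtvd_i \ge 1/M$ and $|t^*_i - \qtvd_i| \le 1/M \le \qtvd_i$. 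Setting $\beta_i := (t^*_i - \qtvd_i)/\qtvd_i \in [-1, 1]$, the $i$-th summand of $\kl(\vect^*\Vert\vecqtvd)$ becomes $\qtvd_i(1+\beta_i)\log(1+\beta_i)$, while the $i$-th summand of $\lone{\vect^*-\vecqtvd}$ is $\qtvd_i|\beta_i|$.

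Next I would establish the pointwise inequality
\begin{equation*}
(1+\beta)\log(1+\beta) \le \log(2)\bigl(|\beta|+\beta\bigr), \quad \beta \in [-1,1].
\end{equation*}
For $\beta \in [-1, 0]$ the right-hand side vanishes and the left-hand side is non-positive, so it is trivial. For $\beta \in [0, 1]$ it reduces to $h(\beta) := (1+\beta)\log(1+\beta) - 2\log(2)\beta \le 0$; since $h(0) = h(1) = 0$ and $h''(\beta) = 1/(1+\beta) > 0$, the function $h$ is convex on $[0,1]$ and therefore lies at or below the chord through its (zero) endpoint values.

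Finally, I would multiply the pointwise bound by $\qtvd_i$, sum over $i$, and observe that $\sum_i \qtvd_i \beta_i = \sum_i (t^*_i - \qtvd_i) = 0$. This leaves precisely $\kl(\vect^*\Vert\vecqtvd) \le \log(2)\lone{\vect^*-\vecqtvd}$. The main obstacle is the pointwise step: the naive guess $(1+\beta)\log(1+\beta) \le \log(2)|\beta|$ fails at $\beta = 1$, where the two sides are $2\log 2$ and $\log 2$; the remedy is to add the linear slack $\log(2)\beta$ on the right, which is harmless after summation precisely because $\vect^*$ and $\vecqtvd$ are both probability vectors.
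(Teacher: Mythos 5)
Your proof is correct, but it takes a genuinely different route from the paper. The paper's proof is a two-line application of an external result: it invokes the reverse Pinsker inequality of \cite[Thm.~7]{Verdu_ITABound}, $\kl(\vect^*\Vert\vecqtvd)\le \tfrac{1}{2}\tfrac{r\log r}{r-1}\lone{\vect^*-\vecqtvd}$ with $r=\sup_i t_i^*/\qtvd_i$, and then shows $r<2$ from $|t_i^*-\qtvd_i|<1/M\le \qtvd_i$, which evaluates the constant to $\log 2$. You instead give a self-contained elementary derivation: the substitution $t_i^*=\qtvd_i(1+\beta_i)$ with $\beta_i\in[-1,1]$ (justified by exactly the same two facts, $|t_i^*-\qtvd_i|\le 1/M$ and $\qtvd_i\ge 1/M$), the pointwise bound $(1+\beta)\log(1+\beta)\le\log(2)(|\beta|+\beta)$ via convexity of $h$ and the endpoint values $h(0)=h(1)=0$, and the cancellation $\sum_i\qtvd_i\beta_i=\sum_i(t_i^*-\qtvd_i)=0$. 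All three steps check out (note that $\beta_i>-1$ always since $t_i^*\ge 1/M>0$, so there is no $0\log 0$ issue, and the closed-interval inequality covers the boundary anyway). In effect you have re-proved the $r\le 2$ case of the reverse Pinsker inequality from scratch; what your version buys is independence from the cited reference and transparency about exactly where the constant $\log 2$ and the normalization of the two vectors enter, at the cost of a slightly longer argument. Both yield the identical bound, so either proof is acceptable.
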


\begin{proof}
 We first emply a \emph{reverse Pinsker inequality} from~\cite[Thm.~7]{Verdu_ITABound}, stating that
 \begin{equation}
  \kl(\vect^*\Vert\vecqtvd)\le \frac{1}{2}\frac{r\log r}{r-1} \lone{\vect^*-\vecqtvd}
 \end{equation}
 where $r:=\sup_{i\colon t_i^*>0} \frac{t_i^*}{\qtvd_i}$. Furthermore, since for variational distance optimal approximations we always have $|t_i^*-\qtvd_i|<1/M$~\cite[Lem.~3]{Geiger_OptimalQuantization}, we can bound
 \begin{equation}
  r < \frac{\qtvd_i+\frac{1}{M}}{\qtvd_i} \le 2
 \end{equation}
 since $\qtvd_i\ge\lfloor Mt_i^*\rfloor/M\ge 1/M$.
\end{proof}

\begin{lemma}\label{lem:nubound}
 In the setting of Lemma~\ref{lem:pythagoras},
 \begin{equation}
  \kl(\vect\Vert\vect^*) \le \log \nu.
 \end{equation}
\end{lemma}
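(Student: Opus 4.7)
The plan is to compute $\kl(\vect\Vert\vect^*)$ directly using the closed form of $\vect^*$ supplied by Lemma~\ref{lem:Iproj}, then bound each of the two pieces that arise from splitting the index set into $\KSet$ and its complement. Recall $\KSet=\{i{:}\ t_i<\nu(M)/M\}$, and note that for $i\notin\KSet$ the $(\cdot)^+$ term in~\eqref{eq:vdef} vanishes, giving $t_i^*=t_i/\nu$, while for $i\in\KSet$ we have $t_i/\nu<1/M$ and hence $t_i^*=1/M$.

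First I would verify that $\nu\ge 1$, so that the target bound $\log\nu\ge 0$ is at least not automatically false. This is immediate because the definition in Lemma~\ref{lem:Iproj} gives $\vect^*\ge \vect/\nu$ componentwise, and taking $\Vert\cdot\Vert_1$ yields $1\ge 1/\nu$.

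Next, using the split, I would write
\begin{align}
\kl(\vect\Vert\vect^*)
&= \sum_{i\notin\KSet} t_i\log\frac{t_i}{t_i/\nu}+\sum_{i\in\KSet} t_i\log\frac{t_i}{1/M}\\
&= (1-T_{\KSet})\log\nu+\sum_{i\in\KSet} t_i\log(Mt_i).
\end{align}
The first term is exactly of the form we want, carrying a factor $1-T_{\KSet}$. For the second term, the defining inequality $t_i<\nu/M$ for $i\in\KSet$ gives $\log(Mt_i)<\log\nu$, and because each $t_i\ge 0$ we may multiply and sum to get $\sum_{i\in\KSet} t_i\log(Mt_i)\le T_{\KSet}\log\nu$. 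Adding the two bounds collapses the $T_{\KSet}$ terms and yields $\kl(\vect\Vert\vect^*)\le\log\nu$, completing the proof.

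There is essentially no hard step; the only thing to be careful about is that the inequality $\log(Mt_i)\le\log\nu$ multiplied by $t_i\ge 0$ preserves its direction regardless of whether $\log(Mt_i)$ is negative (which it typically is, for small $t_i$), and that $\log\nu\ge 0$ makes the combined bound sensible. No reverse Pinsker, no Pythagorean identity, and no optimization theory is needed here: this lemma is a clean computation using only the explicit $I$-projection formula from Lemma~\ref{lem:Iproj}.
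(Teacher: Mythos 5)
Your proof is correct and is essentially identical to the paper's own argument: the same split of the sum over $\KSet$ and its complement using the explicit form of $\vect^*$ from Lemma~\ref{lem:Iproj}, followed by the same bound $\log(Mt_i)\le\log\nu$ for $i\in\KSet$. The added observation that $\nu\ge 1$ is a harmless sanity check not needed for the argument.
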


\begin{proof}
 \begin{align}
  \kl(\vect\Vert\vect^*) &=  \sum_{i=1}^n t_i\log\frac{t_i}{t^*_i}\\
  &= \sum_{i\notin\KSet} t_i\log\frac{\nu t_i}{t_i} + \sum_{i\in\KSet} t_i\log Mt_i\\
  &\stackrel{(a)}{\le} (1-T_{\KSet})\log\nu + \sum_{i\in\KSet} t_i\log \nu\\
  & = \log \nu
 \end{align}
where $(a)$ is because for $i\in\KSet$, $Mt_i\le\nu$.
\end{proof}

To bound $\lone{\vect^*-\vecqtvd}$, we present
\begin{lemma}\label{lem:vdbound}
 Let $\vecp^*$ be a sub-probability distribution with $m\le M$ masses and total weight $1-T$, and let ${\vecpvd}^*$ be the variational distance optimal $M$-type approximation using $J\le M$ masses. Then,
 \begin{equation}\label{eq:tvdbound}
  \Vert\vecp^*-{\vecpvd}^* \Vert_1 \le \frac{m}{2M}+\frac{(M-MT-J)^2}{2mM}.
 \end{equation}
\end{lemma}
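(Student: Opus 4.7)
The approach reduces \eqref{eq:tvdbound} to an algebraic bound on the fractional parts of $Mp^*_i$. Write the $i$-th entry of ${\vecpvd}^*$ as $c_i/M$ with $c_i\in\mathbb{N}_0$ and $\sum_{i=1}^m c_i = J$; set $a_i=\lfloor Mp^*_i\rfloor$, $f_i = Mp^*_i - a_i \in[0,1)$, $A=\sum_i a_i$, and $v=\sum_i f_i = M(1-T)-A$.

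First, local one-unit exchanges (as in the proof of~\cite[Lem.~3]{Geiger_OptimalQuantization}, whose argument extends verbatim to the sub-probability setting) show that any VD-optimal allocation satisfies $c_i\ge a_i$ for every $i$. Moreover, moving one unit from any $i$ with $c_i\ge a_i+2$ to any $j$ with $c_j=a_j$ changes $M\lone{\vecp^*-{\vecpvd}^*}$ by $-2f_j\le 0$; iterating such swaps produces a VD-optimal allocation satisfying $c_i\in\{a_i,a_i+1\}$ for every $i$. Let $S:=\{i:c_i=a_i+1\}$, so $u:=|S|=J-A\in\{0,1,\dots,m\}$. The per-coordinate VD contribution is $(1-f_i)/M$ for $i\in S$ and $f_i/M$ otherwise, giving
\[
M\lone{\vecp^*-{\vecpvd}^*} \;=\; u+v-2\sum_{i\in S}f_i.
\]
Since the VD-optimal $S$ contains the $u$ indices with largest $f_i$, its sum is at least the uniform-average sum $uv/m$; hence $M\lone{\vecp^*-{\vecpvd}^*}\le u+v-2uv/m$.

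Finally, setting $\delta := M-MT-J = v-u$, the algebraic identity
\[
u+v-\frac{2uv}{m} \;=\; \frac{m}{2}+\frac{\delta^2}{2m} - \frac{(m-u-v)^2}{2m}
\]
(verified by expanding both squares on the right) together with $(m-u-v)^2\ge 0$ yields $M\lone{\vecp^*-{\vecpvd}^*}\le m/2+\delta^2/(2m)$. Dividing by $M$ produces \eqref{eq:tvdbound}. The main obstacle is the structural reduction $c_i\in\{a_i,a_i+1\}$; once this is in place, the greedy observation $\sum_{i\in S}f_i\ge uv/m$ is immediate from sorting, and the closing identity is routine.
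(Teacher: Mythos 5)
Your proof follows essentially the same route as the paper's: round every coordinate down, distribute the remaining $u=J-A$ units to the coordinates with the largest fractional parts, lower-bound $\sum_{i\in S}f_i$ by the uniform average $uv/m$, and optimize the resulting quadratic --- your completing-the-square identity is exactly the paper's maximization over $L$ (with $L=u$, $g(L)=v/M$) carried out explicitly, and your justification of the reduction to $c_i\in\{a_i,a_i+1\}$ is in fact more careful than the paper's. The one thing you omit is the degenerate regime that the paper dispatches in its opening paragraph: if $J<\sum_i\lfloor Mp_i^*\rfloor$ no allocation can satisfy $c_i\ge a_i$ for all $i$ (and if $J>\sum_i\lceil Mp_i^*\rceil$ the reduction to $c_i\le a_i+1$ fails), so your structural claim and the assertion $u\in\{0,\dots,m\}$ break down; in those cases the optimal approximation lies entirely on one side of $\vecp^*$, so $\lone{\vecp^*-{\vecpvd}^*}=|1-T-J/M|$, which satisfies \eqref{eq:tvdbound} because $\bigl(m-|M-MT-J|\bigr)^2\ge 0$. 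Adding that short case distinction makes your argument complete.
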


Note that for $J=M$ we recover~\cite[Lemma~4]{Geiger_OptimalQuantization}.

\begin{proof}
Assume that either $\forall i{:}\ p_i^*\ge{\pvd_i}^*$ or $\forall i{:}\ p_i^*\le{\pvd_i}^*$. Note that this is possible since $\vecp^*$ and ${\vecpvd}^*$ are sub-probability distributions. Then, $\lone{\vecp^*-{\vecpvd}^*}=|1-T-J/M|$ which satisfies this bound. This can be seen by rearranging~\eqref{eq:tvdbound} such that $J$ only appears on the left-hand side; the maximizing $J$ (not necessarily integer) then satisfies~\eqref{eq:tvdbound} with equality. 

We thus remain to treat the case where after rounding off all indices, $1\le L\le M-1$ masses remain and we have
 \begin{equation}
  \sum_{i=1}^m p_i^*-\frac{\lfloor M p_i^* \rfloor}{M} =: \sum_{i=1}^m e_i = 1-T-\frac{J-L}{M} =: g(L).
 \end{equation}
 The variational distance is minimized by distributing the $L$ masses to $L$ indices $i\in\mathcal{L}$ with the largest errors $e_i$, hence
\begin{align}
 \Vert\vecp^*-{\vecpvd}^* \Vert_1 &= \sum_{i\in\mathcal{L}} \left(\frac{1}{M}-e_i\right) +\sum_{i\notin\mathcal{L}} e_i\\
 &\stackrel{(a)}{\le} \frac{L}{M}-\frac{L}{n} g(L) + \frac{n-L}{n} g(L)
\end{align}
where $(a)$ follows because for $i\in\mathcal{L},j\notin\mathcal{L}$, $e_i\ge e_j$. This is maximized for $L=\frac{n-(M-MT-J)}{2}$ (not necessarily integer), which after inserting yields the upper bound.
\end{proof}

\begin{proof}[Proof of Bound in Proposition~\ref{prop:A:Bound}]
We start by bounding the informational divergence $\kl(\vect\Vert\vecqtaid)$ by the informational divergence between $\vect$ and the variational distance optimal approximation $\vecqtvd$ of its reverse $I$-projection $\vect^*$ onto $\mathcal{P}_M$:
 \begin{align}
 \kl(\vect\Vert\vecqtaid) &\le \kl(\vect\Vert\vecqtvd)\\
 &\stackrel{(a)}{=} \kl(\vect\Vert\vect^*)+ \nu\kl(\vect^*\Vert\vecqtvd)\\
 &\stackrel{(b)}{\le} \log\nu + \nu\log(2)\lone{\vect^*-\vecqtvd}\\
 &\stackrel{(c)}{\le} \log\nu + \nu\log(2)\frac{n-k}{2M}\\
 &\stackrel{(d)}{\le} \log\nu + \nu\log(2)\frac{n-M+\frac{M}{\nu}}{2M}\\
 &=\log\nu + \frac{\log(2)}{2}\left(1-\nu\left(1-\frac{n}{M}\right)\right)
\end{align}
where
\begin{enumerate}
 \item[$(a)$] is due to Lemma~\ref{lem:pythagoras},
 \item[$(b)$] is due to Lemmas~\ref{lem:boundIDviaVD} and~\ref{lem:nubound},
 \item[$(c)$] is due to Lemma~\ref{lem:vdbound} with $m=n-k$, $1-T=1-k/M$, and $J=M-k$, and
 \item[$(d)$] is follows by bounding $k$ from below via~\eqref{eq:boundOnNu}
 \begin{equation}
    k=\frac{M}{\nu} (\nu-1+T_{\KSet}) \ge \frac{M}{\nu}(\nu-1) = M-\frac{M}{\nu}.
 \end{equation}
\end{enumerate}
\end{proof}

\bibliographystyle{IEEEtran}
\bibliography{%
../../../References/IEEEabrv,%
../../../References/myOwn,%
../../../References/textbooks,%
../../../References/ProbabilityPapers,%
../../../References/HMMRate,%
../../../References/MarkovStuff,%
../../../References/LumpingTanja,%
../../../References/ITBasics,%
../../../References/ITAlgos}

\end{document}